\newcommand{\ignore}[1]{}
\newcommand{\R}{\mathbb{R}}
\newcommand{\Set}{\mathcal{C}}
\newcommand{\comment}[1]{\textcolor{red}{\textbf{#1}}}
\newcommand{\shlomo}[1]{\comment{\textcolor{blue}{Shlomo: #1}}}
\newtheorem{thm}{Theorem}
\newtheorem{lem}[thm]{Lemma}
\newtheorem{defi}{Definition}
\let\xx@thm\@thm
\let\xx@prop\@prop
\let\xx@cor\@cor
\title{Adjustable Coins}
\author{Shlomo Moran\thanks{Compter Science Department, Technion. Israel}
	\and Irad Yavneh\footnotemark[1]
 }
\date{\today}
\begin{document}

\maketitle

\noindent
\begin{abstract} 
	 {
	  In this paper we consider a scenario  where there are several algorithms for solving a given problem. Each algorithm is associated with a probability of success and a cost, and there is also a penalty for failing to solve the problem. The user may run one algorithm at a time for the specified cost, or give up and pay the penalty. The probability of success may be implied by randomization in the algorithm, or by assuming a probability distribution on the input space, which lead to different variants of the problem. The goal is to minimize the expected cost of the process under the assumption that the algorithms are independent.  We study several variants of this problem, and present possible solution strategies and a hardness result.
}
  \end{abstract}
\section{Introduction}
Some optimization problems concern optimal ordering  of certain tasks that aim at achieving some common goal (see, e.g., \cite{BBCSZ14}). A possible scenario can be described as follows:
 the king’s daughter is approaching marrying age. The miser king, who cares about money more than anything else, estimates that if she doesn’t marry then he will need to spend $E$ gold pieces to continue supporting her for the rest of their lives. There are $n$ princes he can invite to try to win her heart, and he must choose the order, knowing that the cost of travel room and board for $prince_i$ is $\mu_i$, and the probability that he will win the princess’s heart is $P_i$. What order minimizes the king's expected cost?  In a more challenging scenario, there are $n$ kingdoms with several eligible princes in each kingdom, each with his own $\mu$ and $P$, and the king must also decide which prince he should invite from each kingdom (he cannot invite more than one).


Such problems can be described as one player games, which we call {\em adjustable coins} games.
%
	%
	An adjustable coin---A-coin in short---is a coin whose bias can be controlled by the user: the probability of success (rolling on one) can be increased, for a price. Formally, an A-coin is defined by a monotone increasing function $\mu:D_\mu\rightarrow \R^+$ where $D_\mu\subseteq (0,1]$ contains the possible success probabilities, and for $P\in D_\mu$, $\mu(P)$ is the nonnegative fee for tossing the coin with   success probability $P$. If $|D_\mu|=1$ then $\mu$ is a {\em simple  coin}, or just coin,    denoted by a pair $c=(P,\mu)$. Thus an   A-coin $\mu$ can be viewed as a set of the simple coins $\{(P,\mu(P)):P\in D_\mu\}$.
	
	In the games studied in this paper the player is given a set $\Set$ of A-coins and a penalty $E>0$, and in each step she may select an A-coin $\mu\in \Set$, and  a probability $P\in D_\mu$, and toss the coin for the fee $\mu(P)$ . If the coin rolls on one the player  terminates the game without paying a penalty, else she either takes another step, or terminates the game and pays the penalty $E$. Thus, a strategy for an A-coin game maps each pair $(\Set,E)$ of the set of A-coins and a penalty to a sequence $SEQ$ of the coins tossed by the player when the outcomes of all tosses are zeros. The goal is to minimize the expected cost---the total amount paid. Variants of this game are determined by the nature of the coins in $\Set$, the rules by which coins can be selected at each step, possible restrictions on the termination rule, etc. Specifically, we distinguish between reusable coins, which can be tossed many times, and one time coins, which can be tossed only once. The latter case corresponds to a scenario where one or more deterministic tests should  be taken on a single item selected at random from a known distribution---repeating a test on the selected item just reproduces the initial outcome.
	
	In Section \ref{sec:basic} we study variants of the game for simple coins.  In Section \ref{sec:discrete} we   study  the case where the A-coins are discrete, i.e., defined for finitely many values. In Section \ref{sec:continuous} we study the game for A-coins which are  piecewise continuous functions. 

 \section{Simple coins}\label{sec:basic}
In this section we study optimal strategies for the game where all A-coins are simple. A useful property of a simple coin $c=(P,\mu)$ is its {\em rate}, given by the ratio $r=\mu/P$. The notation $c\sim (P,r)$ means that the simple coin $c$ has probability $P$ and is of rate $r$, i.e., $c=(P,rP)$.
 \subsection{Single simple coin, single toss} \label{sec:SinglecoinSingleToss}

 In the simplest scenario  we are given a simple coin $c=(P,\mu)$ and a penalty $E$, and we must decide whether tossing the coin $c$ is {\em beneficial}, i.e., reduces the expected cost of the game.

  The expected payment when $c$ is tossed is 	 given by
 \begin{equation}\label{eq:mu}
 COST\left(c, E \right)  = \mu + (1-P)E   = E -(PE-\mu),
 \end{equation}
 which means that the {\em benefit} of using $c$ w.r.t. $E$ is $(PE-\mu)$ (see Figure \ref{fig:benefit}).
  $COST(c,E)$ as a function of  the rate $r$ of $c$   is given by
 \begin{equation}\label{eq:qu}
 COST\left(c, E \right)  = P  r + (1-P)E	 = E-P(E-r)  ,
 \end{equation}
 implying that the benefit of $c$ w.r.t. $E$ is $P(E-r)$. This means that  an optimal strategy for this case is to toss $c$ if and only if its rate is smaller than $E$.
  \begin{figure}[t!]
 	\begin{center}
 		 \vspace{-180pt}
 		\includegraphics[width= \columnwidth]{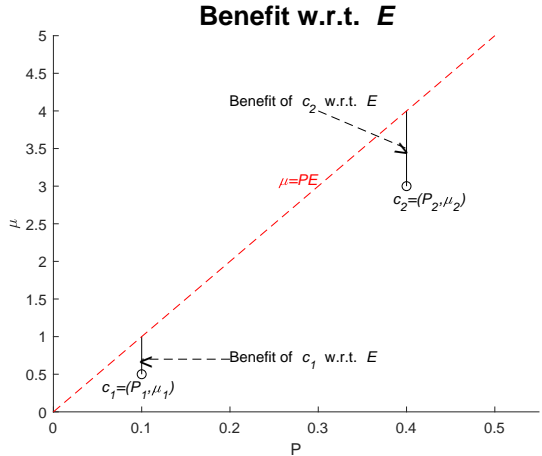}
  	\vspace{-150pt}
 		\caption{\textbf{Benefit for a given penalty.} {A coin $c=(P,\mu)$ is beneficial for penalty $E$ iff $(P,\mu)$ lies below the line $\mu=PE$, and the benefit of $c$ w.r.t. $E$ is $PE-\mu$. Thus both $c_1$ and $c_2$ are beneficial for $E$. The benefit of $c_2$ is larger since it is further away from the line $\mu=PE$, i.e. $\frac{\mu_2-\mu_1}{P_2-P_1}<E$.}
 		}
 		\label{fig:benefit}
 	\end{center}
 \end{figure}

 In a natural extension of the ``single coin single toss'' game we are given a sequence of coins $SEQ=(c_1,\ldots,c_n)$ and a penalty $E$. Our task is to decide for each coin, in the given order, if it should be tossed or skipped, so that the expected cost of the implied game is minimized. For $n=1$ this is the single coin single toss game.  {For   $n>1$ we use {\em backwards induction}:} Suppose
  we have an optimal strategy for $SEQ'=(c_2,\ldots,c_n)$ and $E$, whose cost is $E'$. Then, by a simple calculation,  
   an optimal strategy for   $SEQ$ and $E$   is: toss $c_1$ iff $r_1<E'$, and then,  {if the coin rolls on tail,} continue with the optimal strategy for $SEQ'$ and $E$.

 \subsection{Multiple simple coins, single toss} \label{sec:MultipleCoinSingleToss}
 Suppose next that we are given a {(finite)} set $\Set$ of  simple coins and a penalty $E$,   and  we wish to select an {\em optimal} coin $c\in\Set$ which minimizes $COST(c,E)$, the cost of the single toss game. Let $c=(P,\mu)$.
   By Equation (\ref{eq:mu}), $COST(c,E)$ is minimized  if and only if 
  the point $c=(P,\mu)$   lies below the line $\mu=PE$ and at a maximal distance (see \cref{fig:benefit}).  



 \subsection{Reusable simple coins, multiple tosses} \label{sec:ReusableSimpleCoins}
Suppose now that the coins in $\Set$ are {\em reusable}, that is, we may toss each coin in $\Set$   multiple times.  An elementary calculation shows that if there is no bound on the number of tosses, then tossing a coin of minimal possible rate $r_{min}$ until it rolls on one yields an expected cost $r_{min}$.  {Thus an optimal strategy is: if $r_{min}<E$ then toss a coin $c_{min}$ of rate $r_{min}$ until it rolls on one, else pay the penalty $E$ and terminate.} 

 Assume now that we are allowed to make no more than $k$ tosses for some   $k\geq 0$. Given a set of reusable simple coins $\Set$ and a penalty $E$,  {if $r_{min}\geq E$ then do nothing and pay the penalty $E$. Else an optimal strategy is obtained again by a backwards induction:} 

 {If $k=0$ or $E\leq r_{min}$ then do nothing and pay the penalty $E$. 
	So assume that $E > r_{min}$, and 
	 let   $E_k=E_k(\Set,E,k)$ be the expected cost of an optimal strategy for set of coins $\Set$, penalty $E$ and $k$ tosses (in particular $E_0=E$). Then $E_k > r_{min}$ and an optimal strategy for $k+1$ tosses  is obtained 
   by  first selecting and tossing a coin $c_{k+1}\in\Set$ which minimizes $COST(c_{k+1},E_k)$  as in Section \ref{sec:MultipleCoinSingleToss} (note that $COST(c_{k+1},E_k)<E_k$ since  $E_k > r_{min}$);
  if $c_{k+1}$ rolls on one then stop, else execute the optimal strategy for $\Set,E$ and $k$. Thus $E_{k+1}=COST(c_{k+1},E_k)>r_{min}$.}

 \subsection{One-time simple coins, multiple tosses} \label{sec:NonreusableSimpleCoins}
 We now assume that each coin in $\Set$ can be tossed at most once, and we are allowed to toss as many coins as we wish.

 Consider first a variant of this game in which termination is possible only if either  some coin rolls on one, or  after {\em all} coins rolled on zero. We note that  this variant is, in fact, the problem of  optimal ordering of independent tests that was studied in \cite{BBCSZ14}, where each coin corresponds to a test, and it is needed to check if at least one test fails.

 A strategy for this latter game is a permutation of all the available coins. Thus given a set of $n$ one-time simple coins $\Set=\{c_1,\ldots,c_n\}$, where $c_i=(P_i,\mu_i)$, and a nonnegative penalty $E$,
 we  need to find an {\em optimal ordering} of the  coins in $\Set$, which  minimizes the expected cost. 

 The expected cost  for  ordering  $SEQ=(c_1, c_2,..., c_n)$ and penalty $E$ 
 is given by:
 \begingroup\small
 \begin{equation}\label{eq:mus}
 \begin{array}{lll}
  COST(SEQ,E)& =&{  P_1 \mu_1  + (1-P_1)P_2(\mu_1+\mu_2)+ \ldots}\\
 &+& { (1-P_1)(1-P_2)\cdots(1-P_n)(\mu_1+\cdots+\mu_n+E)}.
 \end{array}
 \end{equation}
 \endgroup
 By straightforward induction, $COST(SEQ,E)$ can also be expressed as the following convex combination of the coin rates $r_1,\ldots,r_n$  and  $E$.
  \begingroup\small
 \begin{equation}\label{eq:qs}
 \begin{array}{lll}
COST (SEQ,E)   &=&  P_1r_1 +   (1-P_1)P_2r_2 +\ldots +    (1-P_1)\cdots(1-P_{i-1})P_ir_i      \\
   &+& \ldots ~~+ (1-P_1)\cdots (1-P_n)E   .
 \end{array}
 \end{equation}
  \endgroup
Equation (\ref{eq:qs}) implies the following useful lemma:
\begin{lem}\label{lem:costreduce}
	Let  $SEQ=(c_1,\ldots,c_n)$, where $c_k\sim(P_k,r_k),k=1,\ldots,n$, and let  $SEQ'$ be obtained from $SEQ$ by interchanging $c_i$ and $c_{i+1}$.  If $r_i\geq r_{i+1}$ then $COST(SEQ',E)\leq COST(SEQ,E)$, with equality iff $r_i=r_{i+1}$.
\end{lem}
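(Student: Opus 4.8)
The plan is to use the convex-combination expression (\ref{eq:qs}) directly and to track exactly which of its terms change under the swap. Write $Q=(1-P_1)\cdots(1-P_{i-1})$ for the common ``prefix survival'' factor, i.e.\ the probability that none of the first $i-1$ coins rolls on one; this quantity is identical in $SEQ$ and $SEQ'$ because the first $i-1$ coins are untouched by the interchange. The key structural observation is that the interchange also leaves the entire \emph{tail} of the sum unchanged: every term of (\ref{eq:qs}) arising from a position $\geq i+2$, as well as the final $E$-term, carries the factor $Q(1-P_i)(1-P_{i+1})$, and since $(1-P_i)(1-P_{i+1})=(1-P_{i+1})(1-P_i)$ this factor is invariant under swapping $c_i$ and $c_{i+1}$. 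Hence the cost difference collapses to the contribution of positions $i$ and $i+1$ alone.

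Carrying this out, in $SEQ$ those two positions contribute $Q\bigl[P_i r_i+(1-P_i)P_{i+1}r_{i+1}\bigr]$, whereas in $SEQ'$ (where $c_{i+1}$ now precedes $c_i$) they contribute $Q\bigl[P_{i+1}r_{i+1}+(1-P_{i+1})P_i r_i\bigr]$. Subtracting and simplifying, the terms $P_i r_i$ and $P_{i+1}r_{i+1}$ cancel against their counterparts and the cross terms combine, yielding
\begin{equation*}
COST(SEQ,E)-COST(SEQ',E)=Q\,P_i P_{i+1}\,(r_i-r_{i+1}).
\end{equation*}

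Finally I would read off the conclusion from the sign of this expression. The coefficient $Q\,P_i P_{i+1}$ is nonnegative (indeed positive whenever $P_1,\ldots,P_{i-1}<1$, since each $P_j>0$), so $r_i\geq r_{i+1}$ forces the right-hand side to be $\geq 0$, i.e.\ $COST(SEQ',E)\leq COST(SEQ,E)$, and equality holds exactly when $r_i=r_{i+1}$. There is no real analytic obstacle here; the only point that needs care is the bookkeeping in the structural step---verifying that the swap genuinely affects only the two middle terms and that the surviving prefactor is symmetric in $P_i$ and $P_{i+1}$---after which the reduction to $Q\,P_i P_{i+1}(r_i-r_{i+1})$ is a short, routine cancellation.
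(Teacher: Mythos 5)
Your proof is correct and follows essentially the same route as the paper's: both substitute into Equation (\ref{eq:qs}) and reduce the difference to $\bigl(\prod_{k=1}^{i-1}(1-P_k)\bigr)P_iP_{i+1}(r_i-r_{i+1})$, whose sign gives the claim. Your version merely spells out the bookkeeping (invariance of the prefix and tail terms) that the paper leaves implicit.
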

\begin{proof}
	By substituting in Equation (\ref{eq:qs}) we get
	$$
	COST(SEQ',E) - COST(SEQ,E) =\bigg(\prod_{k=1}^{i-1}(1-P_k)\bigg)P_iP_{i+1}(r_{i+1}-r_{i}) ,
	$$
	which is negative for $r_i>r_{i+1}$, and equals zero iff $r_i=r_{i+1}$.
\end{proof}
Lemma \ref{lem:costreduce} implies: 
 \begin{lem}\label{lem:sequence}
 	Let $\Set=\{c_1,\ldots,c_n\}$ be a set of {one-time} simple coins, where the rate of $c_i$ is $r_i$. Then for each penalty $E$ and each permutation $\pi$, $(c_{\pi(1)},\ldots,c_{\pi(n)})$    is an optimal ordering of $\Set$  w.r.t $E$  if and only if  $r_{\pi(i)}\leq r_{\pi(i+1)}$ for $i=1,\ldots,n-1$.
 \end{lem}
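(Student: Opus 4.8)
The plan is to run a standard exchange (adjacent-transposition) argument on top of \cref{lem:costreduce}. The key observation is that any ordering can be sorted into non-decreasing rate order by repeatedly swapping adjacent pairs that are ``out of order'', and \cref{lem:costreduce} controls exactly what each such swap does to the cost: a swap of a pair with $r_i \ge r_{i+1}$ never increases the cost, and strictly decreases it when $r_i > r_{i+1}$.

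For the necessity direction (optimal $\Rightarrow$ sorted), I would argue by contraposition. Suppose $(c_{\pi(1)},\ldots,c_{\pi(n)})$ is not sorted, so there is an index $i$ with $r_{\pi(i)} > r_{\pi(i+1)}$. Interchanging $c_{\pi(i)}$ and $c_{\pi(i+1)}$ yields an ordering $SEQ'$; since the inequality is strict, \cref{lem:costreduce} gives $COST(SEQ',E) < COST(SEQ,E)$, so $\pi$ is not optimal. Hence every optimal ordering satisfies $r_{\pi(i)} \le r_{\pi(i+1)}$ for all $i$.

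For the sufficiency direction (sorted $\Rightarrow$ optimal), I would first show that every non-decreasing ordering achieves the same cost, and then that this common value is the global minimum. For the first point, any two non-decreasing orderings differ only by a permutation within the blocks of equal rate, which can be realized by adjacent transpositions of equal-rate coins; by the equality case of \cref{lem:costreduce} each such swap leaves the cost unchanged, so all sorted orderings share one common value $C^*$. For the second point, take an arbitrary ordering $\sigma$ and bubble-sort it into non-decreasing order using adjacent swaps, each applied to a pair with $r_i \ge r_{i+1}$; by \cref{lem:costreduce} no such swap increases the cost, so $COST(\sigma,E) \ge C^*$. Combining the two points, every sorted ordering attains the minimum and is therefore optimal.

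The main subtlety---and the only place where real care is needed---is the bookkeeping around ties. The single-swap contradiction handles necessity cleanly, because a failure of the sorted condition is by definition a strict descent. The delicate part is justifying that all non-decreasing orderings are genuinely cost-\emph{equivalent} (not merely no-worse than one another), which is what upgrades the characterization to an exact ``if and only if'' rather than a one-sided optimality statement; this rests entirely on the equality clause $r_i = r_{i+1}$ in \cref{lem:costreduce}. Everything else is the routine verification that the bubble-sort process terminates and that each elementary step is covered by the lemma.
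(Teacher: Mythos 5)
Your proof is correct and follows exactly the route the paper intends: the paper states \cref{lem:sequence} as an immediate consequence of \cref{lem:costreduce} without spelling out the details, and your exchange/bubble-sort argument (strict swap for necessity, equality-case swaps plus non-increasing sorting for sufficiency) is precisely the standard filling-in of that implication, including the correct handling of ties.
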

\noindent
Observe that   the optimal orderings of a set of coins are independent of the value of the penalty $E$. We note that Lemma \ref{lem:sequence} is equivalent to Theorem 1 of \cite{BBCSZ14} which considered optimal ordering of independent tests.

 Assume now that the player can terminate the game at any time (i.e., even if no coin rolled on one and some coins   were not tossed yet). By Lemma \ref{lem:sequence} and  the comment at the end of Section \ref{sec:SinglecoinSingleToss}, an optimal strategy is obtained by an optimal ordering of the coins in $\Set$ whose rates are smaller than $E$. 
 \begin{lem}
 	\label{lem:simple}
 	The optimal strategies for a set $\Set$ of one-time simple coins and a penalty $E$ are obtained by the optimal orderings of the coins in $\Set$ whose rates are smaller than  $E$.
 \end{lem}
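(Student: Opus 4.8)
The plan is to first reduce the space of strategies to a manageable combinatorial object. Because coins are one-time and the game only continues along the \emph{all-zeros} path (any coin landing on one terminates successfully), every adaptive strategy is completely described by the sequence of distinct coins it would toss when all outcomes are zero, with ``quit and pay $E$'' being simply the end of that sequence. Hence minimizing over all strategies is the same as minimizing $COST(SEQ,E)$ over all sequences $SEQ$ of distinct coins from $\Set$, where $COST(SEQ,E)$ is given by Equation~(\ref{eq:qs}). Two tools will then do the work. The first is \cref{lem:sequence}, which (applied to the coin \emph{set} underlying any sequence) says the increasing-rate ordering is optimal; so I may restrict attention to increasing-rate sequences. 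The second is the prepend recursion $COST((c)\cdot T,E)=P_c r_c+(1-P_c)COST(T,E)$, which follows directly from Equation~(\ref{eq:qs}) and gives prepend-benefit $P_c\bigl(COST(T,E)-r_c\bigr)$, together with the \emph{prefix-monotonicity} fact that $COST(S\cdot T,E)$ is an affine, nondecreasing function of $COST(T,E)$ with coefficient $\prod_{c\in S}(1-P_c)\in[0,1]$.

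Next I would settle the selection of which coins to toss, writing $A=\{c\in\Set : r_c<E\}$. Starting from an arbitrary increasing-rate sequence, I carry out two cost-non-increasing surgeries. For removal: deleting the last coin $c$ when $r_c\ge E$ cannot increase the cost, since $COST((c),E)=P_c r_c+(1-P_c)E\ge E$, and prefix-monotonicity transfers this inequality through the preceding prefix; because in increasing-rate order the coins of rate $\ge E$ form a suffix, this removes all of them at once. For insertion: for a missing coin $a\in A$, I insert it at its sorted position, writing the sequence as $S\cdot(a)\cdot T$ where every coin of the suffix $T$ has rate $\ge r_a$ and the terminal fallback is $E>r_a$. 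Then $COST(T,E)$ is, by Equation~(\ref{eq:qs}), a convex combination of quantities all $\ge r_a$, hence $COST(T,E)\ge r_a$, so the prepend-benefit $P_a\bigl(COST(T,E)-r_a\bigr)\ge 0$ and, again by prefix-monotonicity through $S$, the insertion does not increase the cost. Thus the increasing-rate ordering of exactly the coins in $A$ is optimal.

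Finally, to characterize \emph{all} optimal strategies (the ``only if'' direction), I combine the strict parts of these surgeries with the iff of \cref{lem:sequence}: a non-increasing-rate ordering is strictly suboptimal (\cref{lem:sequence}), including a rate-$>E$ coin is strictly wasteful, and omitting a rate-$<E$ coin is strictly suboptimal, provided the relevant prefix-probability factor is nonzero. This yields that the optimal strategies are precisely the optimal (increasing-rate) orderings of the coins in $\Set$ of rate smaller than $E$.

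The main obstacle I expect is the insertion direction and the treatment of equality cases. Making ``include every rate-$<E$ coin'' airtight requires doing the insertion at the sorted position so that the suffix $T$ has only rates $\ge r_a$, which is what makes the convex-combination bound $COST(T,E)\ge r_a$ valid and lets prefix-monotonicity apply. The degenerate cases also need care: a coin of rate exactly $E$ contributes zero benefit and may therefore be included or excluded without changing the cost, a caveat to the strict phrasing ``rates smaller than $E$''; and a coin with $P=1$ truncates the all-zeros path so that all later coins become irrelevant, making the product $\prod_{c\in S}(1-P_c)$ vanish and collapsing the corresponding strict inequalities to equalities. I would handle these as explicit boundary remarks rather than letting them obscure the main exchange argument.
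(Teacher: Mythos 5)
Your proof is correct, and its skeleton---\cref{lem:sequence} for the ordering plus a rate-versus-$E$ threshold for the selection---matches the paper's; but where the paper disposes of the selection step in a single sentence, by appealing to the backwards induction of \cref{sec:SinglecoinSingleToss} (toss a coin iff its rate is below the optimal continuation cost), you argue by exchange surgeries: delete rate-$\ge E$ coins from the suffix, insert each missing rate-$<E$ coin at its sorted position, each step non-increasing in cost via the convex-combination form of Equation~(\ref{eq:qs}). The two routes rest on the same underlying inequality---the continuation cost of a sorted suffix with fallback $E$ is at least the rate of any coin that could precede it---so this is a difference of execution rather than of ideas, but the difference is not empty. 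What your version buys is rigor the paper leaves implicit: the reduction of adaptive strategies to all-zeros toss sequences (which is what licenses treating strategies as sequences at all, given one-time coins), the prefix-monotonicity device that transfers local improvements through an arbitrary prefix, and the explicit treatment of the degenerate cases (a coin of rate exactly $E$, a coin with $P=1$) where the ``only if'' reading of the lemma holds only up to ties---caveats the paper never acknowledges. What the paper's route buys is economy: the backwards-induction recursion is machinery already in place for the given-order and bounded-toss variants, so the lemma falls out with no new argument, at the price of leaving the threshold-equals-$E$ claim (rather than threshold-equals-continuation-cost) unverified, which is precisely the gap your convex-combination bound $COST(T,E)\ge r_a$ closes.
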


 \subsubsection{One-time simple coins, bounded number of tosses}
 Assume now that the coins are not reusable, and we may use at most $k$ coins for some   $k\geq 0$. This problem can be solved  by the following dynamic programming algorithm.
 First sort the coins whose rates are smaller than $E$ by increasing rates (ties are broken arbitrarily). Let the sorted list be $(c_1,\ldots,c_n)$, where $c_i=(P_i,\mu_i)$.\\
 For $i=1,\ldots,n$ and $j=0,\ldots,\max(k,n-i+1)$, let $OPT(i,j) $ be the value of the optimal strategies for the sequence $(c_i,c_{i+1},\ldots,c_n)$ and penalty $E$ which use  {at most}  $j$ coins. Thus our task is to find $OPT(1,k)$. This can be done in $kn$ steps by setting {$OPT(i,0)=E$ for $i=1, \ldots,n$, $OPT(n,j)=\mu_{n}+(1-P_n)E$ for $j\geq 1$, }and then using  the following recursive formula for $i=n-1,n-2,\ldots,1,j=1,2\ldots k$:
 $$
 OPT(i,j)=
 \min\bigg(   OPT(i+1,j), ~~~
 \mu_i+(1-P_i)OPT(i+1,j-1)~~\bigg).
 $$
 This implies the following.
 \begin{lem}\label{lem:bounded}
 	Given a set $\Set$ of $n$ one-time simple coins and bound $k$ on the number of tosses, an optimal strategy for the implied game can be found in $O(kn)$ time.
 \end{lem}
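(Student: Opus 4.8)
The plan is to verify that the dynamic program displayed above computes an optimal strategy, and then to count its operations; the statement follows immediately. Essentially all of the content is in the correctness argument, and within that in a single structural reduction, while the complexity bound is a routine count.

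First I would justify restricting attention to a sorted list. By the single-toss analysis of \cref{sec:SinglecoinSingleToss} and \cref{lem:simple}, no optimal strategy ever tosses a coin of rate at least $E$, so discarding such coins is harmless; this is why we sort only the coins of rate smaller than $E$, obtaining $(c_1,\ldots,c_n)$ in increasing order of rate. The crucial claim is that an optimal strategy using at most $k$ of these coins may always be taken to toss its chosen coins in increasing order of rate. This is exactly \cref{lem:sequence} applied to the chosen subset: for any fixed set $S$ of coins that the strategy tosses, the increasing-rate ordering of $S$ minimizes the expected cost among all orderings of $S$ (by the exchange argument of \cref{lem:costreduce}). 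Consequently every optimal strategy corresponds to selecting an increasing-rate subsequence of $(c_1,\ldots,c_n)$ of length at most $k$ and tossing its coins from left to right, which is precisely the family of strategies the dynamic program searches over.

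Next I would establish the recurrence by an optimal-substructure argument. With $OPT(i,j)$ defined as in the text, fix the suffix $(c_i,\ldots,c_n)$ and a budget $j\ge 1$. In the increasing-rate representation the first decision is whether the chosen subsequence contains $c_i$. If it does not, the optimal cost equals that of the suffix $(c_{i+1},\ldots,c_n)$ with the same budget, namely $OPT(i+1,j)$. If it does, then since $c_i$ has the smallest rate in the suffix it is tossed first: we pay $\mu_i$, terminate with probability $P_i$, and with probability $1-P_i$ are left facing the suffix $(c_{i+1},\ldots,c_n)$ with budget $j-1$ and unchanged penalty $E$, whose optimal expected cost is $OPT(i+1,j-1)$. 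This gives $\mu_i+(1-P_i)OPT(i+1,j-1)$, and taking the minimum of the two options yields the stated recurrence. The base cases are immediate: $OPT(i,0)=E$ since with no tosses one simply pays the penalty, and $OPT(n,j)=\mu_n+(1-P_n)E$ for $j\ge 1$ since a single remaining coin of rate below $E$ is beneficial to toss.

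Finally I would bound the running time. After the preliminary sort, the table has at most $nk$ relevant entries, each computed in constant time from two previously computed entries, so filling the table, reading off $OPT(1,k)$, and backtracking to recover the strategy all take $O(kn)$ time. I expect the main obstacle to be making the structural reduction fully rigorous: one must argue that it suffices to consider increasing-rate subsequences rather than arbitrary choices of subset and order, so that the per-coin ``include or skip'' decisions of the dynamic program indeed explore all optimal strategies. Once this reduction is in place, the recurrence and the complexity count follow with no further difficulty.
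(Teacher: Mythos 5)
Your proposal is correct and follows the paper's own approach: the paper's "proof" is precisely the dynamic program you analyze, with correctness resting (as you argue) on restricting to coins of rate below $E$, the exchange argument of \cref{lem:costreduce} reducing the search to increasing-rate subsequences of the sorted list, and the include-or-skip recurrence with the stated base cases. Your write-up merely makes explicit the optimal-substructure and counting steps that the paper leaves implicit, so there is nothing substantively different to compare.
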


 {The results for simple coins are summarized in \cref{tab:simple}.}
 \vspace{-5pt}
\begin{table}[ht]
	\caption{Optimal solutions for Simple Coins}
	\vspace{-15pt}
	\begin{center}
		\begin{tabular}{|c|c|c|c|c|}
			\hline
			\#coins&Reuse&Tosses&Discussed&Comments\\	
			\hline\hline
			1&No&1&\multirow{3}{*}{Section 2.1}&Toss if $rate < E$\\
			\cline{1-3}\cline{5-5}
			1&Yes&Unlimited& &  $rate<E\rightarrow$ toss until success \\
			\cline{1-3}\cline{5-5}
			$n$&No& Unlimited, order given& &Backwards induction\\
			\hline
			$n$&Yes&Unlimited&\multirow{2}{*}{Section 2.3}&Toss $c_{min}$ until success\\
			\cline{1-3}\cline{5-5}
			$n$&Yes& Bounded &&Backwards induction\\
			\hline
			$n$&No& Unlimited &Lemma 3&Toss by increasing rates\\
			\hline
			$n$&No& Bounded &Lemma 4&Backwards dynamic programming\\
			\hline
		\end{tabular}
	\end{center}
	\label{tab:simple}
\end{table}
 \section{Discrete coins}\label{sec:discrete}

  \begin{figure}[h!]
 	\begin{center}
 		\vspace{-150pt}
 		 \includegraphics[width= \columnwidth]{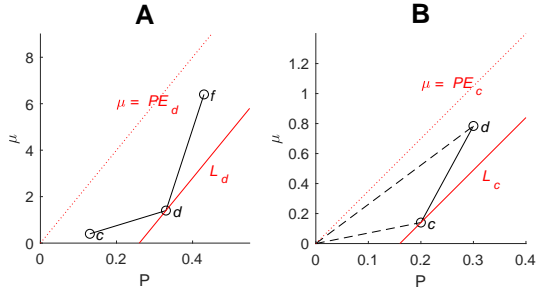}
 		\vspace{-180pt}
 		\caption{
 			 {A:  $L_d$, the supporting line for $d$ and $E_d$,   is a supporting line of the convex hull of $\mu=\{c,d,f\}$. B: The rate of  $c$ (i.e., the slope  of the dashed line segment connecting $c$ to the origin) is smaller than the rate of $d$,   (the slope of the line connecting $d$ to the origin).
 			} 	 		}
 	 		\label{fig:convex}
 	\end{center}
 \end{figure}

 An A-coin $\mu$ is discrete if its   domain $D_\mu$ is finite and contains at least two simple coins.
  Given a penalty $E>0$, $COST(\mu,E)$ is naturally defined as
 	$$COST(\mu,E) = \min_{c\in\mu}COST(c,E).$$
An A-coin $\mu$ can be viewed as the set of the simple  coins $\{(P,\mu(P)):P\in D_\mu\}$. We assume that this set does not contain redundant coins, in the sense of the following definitions.
 	\begin{defi}\label{def:essential}
 	A coin   $c\in\mu$ is {\em essential} for $\mu$ (or just essential when $\mu$ is clear) if there is a penalty $E_c$ s.t. for any other coin $d\in\mu$, $COST(c,E_c)<COST(d,E_c)$. Given  such $c$ and $E_c$, the {\em supporting line for $c$ and $E_c$} is the line $L_c$ which contains   $c$ and is parallel to the line $\mu=PE_c$ (see \cref{fig:convex}).
 	\end{defi}
 Note that if $c$ is not essential then for each $E>0$,   $COST(\mu\setminus \{c \},E)=COST(\mu,E)$, meaning that $c$ can be removed from $\mu$ without reducing its quality. Hence, we assume from now on that  a discrete  coin contains only essential simple coins.
 \begin{defi}\label{def:effucuent} A discrete coin $\mu$ is {\em efficient} if each coin in $\mu$ is essential.
 \end{defi}

Efficient discrete coins posses nice geometrical properties, depicted in \cref{fig:convex}:
 	\begin{lem}\label{lem:convex}
 		Let $\mu$ be an efficient discrete coin. Then
 		\begin{enumerate}
 			\item\label{item:convex1}  $\mu$ is a strictly convex function on $D_\mu$, and
 			\item\label{item:convex2} The function $r(P)=\mu(P)/P$ is strictly increasing on $D_\mu$.
 		\end{enumerate}
 	\end{lem}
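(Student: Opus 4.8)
The plan is to translate the hypothesis ``every coin is essential'' into a statement about supporting lines and then read off both conclusions from elementary plane geometry. Order the simple coins of $\mu$ by increasing probability, $P_1<P_2<\cdots<P_m$, and write $c_i=(P_i,\mu_i)$, $r_i=\mu_i/P_i$, and $s_i=\frac{\mu_{i+1}-\mu_i}{P_{i+1}-P_i}$ for the slope of the segment joining consecutive points. By \cref{def:essential} together with Equation~(\ref{eq:mu}), the inequality $COST(c_i,E_i)<COST(d,E_i)$ is equivalent to $\mu_d-\mu_i>E_i(P_d-P_i)$; hence essentiality of $c_i$ says exactly that \emph{every} other coin lies strictly above the line $L_{c_i}$ through $c_i$ of slope $E_i$. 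Thus each coin is a vertex of the lower boundary of the convex hull, carrying its own positive-slope supporting line.

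For part~\ref{item:convex1} I would argue by contradiction. Suppose strict convexity fails, so that for some $i<j<k$ the middle point $c_j$ lies on or above the chord from $c_i$ to $c_k$; writing $P_j=\lambda P_i+(1-\lambda)P_k$ with $\lambda\in(0,1)$, this reads $\mu_j\ge \lambda\mu_i+(1-\lambda)\mu_k$. Now apply the two defining inequalities of the supporting line $L_{c_j}$, namely $\mu_i>\mu_j+E_j(P_i-P_j)$ and $\mu_k>\mu_j+E_j(P_k-P_j)$, and take their combination with weights $\lambda$ and $1-\lambda$. The slope contributions cancel because $\lambda(P_i-P_j)+(1-\lambda)(P_k-P_j)=0$, leaving $\lambda\mu_i+(1-\lambda)\mu_k>\mu_j$, which contradicts the previous inequality. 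Hence no such triple exists, the consecutive slopes $s_1,\dots,s_{m-1}$ are strictly increasing, and $\mu$ is strictly convex on $D_\mu$.

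For part~\ref{item:convex2} I would first record the purely algebraic identity $r_{i+1}-r_i=\frac{P_{i+1}-P_i}{P_{i+1}}\,(s_i-r_i)$, obtained by substituting $\mu_{i+1}=\mu_i+s_i(P_{i+1}-P_i)$ into $r_{i+1}-r_i=\frac{\mu_{i+1}P_i-\mu_i P_{i+1}}{P_iP_{i+1}}$. Since $P_{i+1}>P_i>0$, it suffices to prove $s_i>r_i$ for every $i$. Here I would again use $L_{c_i}$: because $c_{i+1}$ lies strictly above it we get $E_i<s_i$, while the fact that $c_i$ is beneficial at the witnessing penalty $E_i$ (equivalently, $c_i$ lies strictly below the origin-line $\mu=PE_i$, i.e.\ $L_{c_i}$ has a negative $\mu$-intercept) gives $r_i<E_i$. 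Chaining these yields $r_i<E_i<s_i$, hence $r_{i+1}>r_i$ and $r$ is strictly increasing.

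The cancellations above are harmless; the real content---and the step I expect to be the crux---is the inequality $s_i>r_i$ in part~\ref{item:convex2}. It does \emph{not} follow from strict convexity alone: two essential coins can be placed on a positively sloped line through the origin (making the rate flat), or on a line with positive $\mu$-intercept (making it decrease). What rescues the claim is that the witnessing penalty of an essential coin also renders it \emph{beneficial}, so that the origin lies strictly above every supporting line $L_{c_i}$; geometrically this pins the origin below-left of the whole convex chain and forces the origin-slopes $r_i$ to increase. I would therefore be careful to extract $r_i<E_i$ from the positive benefit $P_iE_i-\mu_i>0$ of Equation~(\ref{eq:mu}) before invoking $E_i<s_i$, and would flag this reliance on beneficiality explicitly, since the stated definition of \emph{essential} compares $c_i$ only against the other coins.
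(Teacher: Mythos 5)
Your proof follows essentially the same route as the paper's sketch: essentiality gives each coin a supporting line of slope $E_c$ with every other coin strictly above it; part~\ref{item:convex1} follows because each coin is then a vertex of the (lower) convex hull, which you carry out algebraically by contradiction; and part~\ref{item:convex2} follows from the chain $r_i<E_i<s_i$, which is exactly the paper's argument ``$r_c<E_c$, and $d$ lies strictly above $L_c$ whose slope is $E_c$, hence $r_c<r_d$,'' organized over consecutive coins via your identity $r_{i+1}-r_i=\frac{P_{i+1}-P_i}{P_{i+1}}(s_i-r_i)$.

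The caveat you flag at the end is not a stylistic quibble; it is a genuine gap, and it is present in the paper's own proof, which justifies $r_c<E_c$ only by the parenthetical ``since $c$ lies below the line $\mu=PE_c$.'' Nothing in \cref{def:essential} forces this: essentiality compares $c$ only against the other coins, never against the option of paying the penalty outright. Under the literal definition, part~\ref{item:convex2} is in fact false. Take $\mu=\{c_1,c_2\}$ with $c_1=(0.5,10)$ and $c_2=(0.9,10.1)$: then $COST(c_1,E)=10+0.5E$ and $COST(c_2,E)=10.1+0.1E$, so $c_1$ is the strict minimizer for every $E<0.25$ and $c_2$ for every $E>0.25$; both coins are essential, hence $\mu$ is efficient, yet $r_1=20>10.1/0.9=r_2$, so $r(P)$ is strictly \emph{decreasing}. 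Note that no witnessing penalty for $c_1$ can exceed $0.25<r_1$, so beneficiality cannot be rescued by choosing $E_1$ more cleverly; your sentence ``the witnessing penalty of an essential coin also renders it beneficial'' is therefore not recoverable from the definition as stated. Both your argument and the paper's become valid verbatim once \emph{essential} is strengthened to require $COST(c,E_c)<\min\bigl(E_c,\min_{d\neq c}COST(d,E_c)\bigr)$, i.e., that at $E_c$ the coin strictly beats the do-nothing option as well. Your write-up is the more careful of the two precisely because it isolates $r_i<E_i$ as the unproved step rather than asserting it silently.
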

 \begin{proof}[Sketch of proof.]
 	 Let $c\sim (P_c,r_c)$ be any coin in $\mu$.   Then $L_c$,  the supporting line for  $c$ and $E_c$, is  a supporting line of the convex hull of $\mu$ which contains $c$ but no other coin in $\mu$  (see \cref{fig:convex}A). This proves (\ref{item:convex1}).\\
 	 To show (\ref{item:convex2}), let $d\sim (P_d,r_d)$ be another coin in $\mu$,  where $P_c<P_d$.
 	Then $r_c<E_c$ (since $c$ lies below the line $\mu=PE_c$), and $d$ lies strictly above $L_c$, the supporting line for $c$ and $E_c$,  whose slope is $E_c$.  Hence $r_c<r_d$   (see \cref{fig:convex}B).
  	 \end{proof}

\subsection{Reusable discrete coins}
Optimal strategies for reusable  discrete coins are   implied by such strategies for sets of simple coins:
 Given a set $\Set$ of discrete coins and a penalty $E$, let $\Set'$ be the  union of the discrete coins in $\Set$. Applying   the  optimal strategies for reusable simple coins presented in Section \ref{sec:ReusableSimpleCoins} on $\Set'$ and $E$ yields optimal strategies for $\Set$ and $E$. For example, when the number of tosses is unbounded, \cref{lem:convex} implies the following  strategy for a reusable efficient discrete coin $\mu$: Let $c_{min}\sim(P_{min},r_{min})$ be the coin with the minimum success probability in $\mu$. If $E>r_{min}$ then repeatedly toss $c_{min}$ until it rolls on one, else pay the penalty $E$ and terminate.


 \subsection{One-time discrete coins}\label{sec:norep}
  {Suppose we are given a sequence of one-time discrete coins and we need to decide for each coin $\mu$, in its turn, whether to toss a simple coin from $\mu$, and if so which one. Then we can use a simple extension of the strategy for simple coins in \cref{sec:SinglecoinSingleToss}: Given an optimal strategy for a sequence $(\mu_2,\ldots,\mu_n)$ with optimal cost $E'$, an optimal strategy for $(\mu_1,\mu_2,\ldots,\mu_n)$ is obtained by skipping $\mu_1$ if $COST(\mu_1,E')\geq E'$, and tossing a coin $c\in\mu_1$ which minimizes $COST(\mu_1,E')$ and upon failure executing the optimal strategy for $(\mu_2,\ldots,\mu_n)$ otherwise.}

 Finding an optimal strategy for one-time  discrete coins  {when the sequence is not given} is complicated by the fact that we need to decide which simple coin should be selected from each discrete coin (if any) {before the order of tosses is known}.  Once these coins are selected, 
 all we need to do is to toss them in an increasing order of their rates, until some coin rolls on one (or all coins roll on zero), as discussed in Section \ref{sec:NonreusableSimpleCoins}. Thus, the strategy is determined by the way the simple coins are chosen from the given discrete ones.
 We next show  that such a selection 
 is NP Hard even in a highly restricted variant of the problem. 


 \subsubsection{ The  $\{0,1\}$ discrete coins problem}

 We now present a restricted version of the A-coins problem---the $\{0,1\}$ A-coins problem, and prove that it is  NP hard.  An  instance $(\Set,E)$ for this problem consists of a set  $\Set=\{A_1,\ldots,A_n\}$ {of one-time discrete coins}, where each  $A_i$  contains two coins $c_i, d_i$, s.t. the rate of all $c_i$ coins is $ 0$ and the rate of all $d_i$ coins is $1$, i.e., $c_i=(0,P_{i,0})$ and $d_i=(P_{i,1},P_{i,1})$, where $0\leq P_{i,0}<P_{i,1}\leq 1, i=1,\ldots,n$.
 Let $h_i=1-P_{i,0}$ and $\ell_i=1-P_{i,1}$; then $h=(h_1,\ldots,h_n)$ and $\ell=(\ell_1,\ldots,\ell_n)$  are the  {\em failure probability vectors} of $c_i$'s and $d_i$'s (note that $h_i>\ell_i$ for all $i$).

 \subsubsection*{NP Hardness of the $\{0,1\}$ discrete coin  problem}
 Consider an instance to the $\{0,1\}$ problem with penalty $E>1$, and let $D=E-1$. By Lemma \ref{lem:simple} an optimal strategy for this instance  is obtained by selecting a set  $S\subseteq I_n$ of indices $i$ for which the coin $c_i$ is chosen, and tossing the coins in $S$ first.  The cost of this strategy can be calculated as follows: charge each event (sequence of tosses) in which the first $|S|$ tosses are zero by one, and in addition charge the event  in which all tosses are zero by $D$. The resulting cost  is
 \begin{equation}\label{eq:cost1}
 COST_{h,\ell,D}(S) = \prod_{i\in S}h_i\left(1+D\cdot\prod_{i\in I_n\setminus S}\ell_i\right).
 \end{equation}
 For a vector $a=(a_1,\ldots,a_n)$, let $prod(a)$ be the product $\prod_{i=1}^n a_i$.
 Then
 $\prod_{[i\in  I_n\setminus S]}\ell_i = \frac{prod(\ell)}{\prod_{i\in S}\ell_i}$.
 Hence, letting $b_i=\frac{\ell_i}{h_i}<1$ and $b=(b_1,\ldots,b_n)$, we can rewrite (\ref{eq:cost1}) as a function of $h$ and $b$:

 \begin{equation}\label{eq:cost2}
 COST_{h,b,D}(S) = \prod_{i\in S}h_i +\frac{D\cdot prod(\ell)}{\prod_{i\in S}b_i}
 \end{equation}
 Let $H_S=\prod_{i\in S}h_i$ and $B_S= \prod_{i\in S}b_i$. Then Equation (\ref{eq:cost2}) can be rewritten as

 \begin{equation}\label{eq:cost3}
 COST_{\ell,b,C}(S) = H_S +\frac{C}{B_S},
 \end{equation}
 where $C=D\cdot prod(\ell)$.
 Consider now the case $h=b$ (i.e., $\forall i:\ell_i=h_i^2$).
 Then we get
 \begin{equation}\label{eq:cost3oneD}
 COST_{h,C}(S) = H_S +\frac{C}{H_S}.
 \end{equation}
 Since the function $f(x)=x+\frac{C}{x}$ has a unique minimum at $x=\sqrt{C}$, we get that $COST_{h,C}(S)\geq \sqrt{C}+\frac{1}{\sqrt{C}}$, with equality iff $H_s=\sqrt{C}$. This implies the following:
 \begin{lem}\label{lem:H_S}
 	Let $\Set=\{A_i:i=1,\ldots,n\}$ be a set of A-coins with $A_i=\{c_i,d_i\}$ where
 	\[ c_i \sim(0,1-h_i),~~d_i\sim (1,1-h_i^2) \]
 	and let $C=D\cdot\left( prod(h)\right)^2$.
 	Let further $OPT(\Set,D+1)$ be the value of the optimal solution to the   $\{0,1\}$ A-coins problem for $\Set$ and penalty $D+1$. Then
 	\[OPT(\Set,D+1)\geq \sqrt{C}+\frac{1}{\sqrt{C}},\]
 	with equality iff for some $S\subseteq I_n$ it holds that $H_S=\sqrt{C}$.
 \end{lem}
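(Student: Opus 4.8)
The plan is to read the result off the one-variable reduction already carried out in the text, the only genuinely new ingredient being an elementary minimization. Recall from \cref{lem:simple} that, for penalty $E=D+1>1$, every optimal strategy for a $\{0,1\}$ instance is described by the choice of a subset $S\subseteq I_n$ of indices at which the free rate-$0$ coin $c_i$ is selected (the remaining indices contribute their rate-$1$ coin $d_i$), with all the $c_i$'s tossed before the $d_i$'s. Hence $OPT(\Set,D+1)=\min_{S\subseteq I_n}COST_{h,\ell,D}(S)$, and the cost of each such strategy is given by \eqref{eq:cost1}.

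First I would substitute the hypothesis of the lemma, $d_i\sim(1,1-h_i^2)$, which gives $\ell_i=1-(1-h_i^2)=h_i^2$, equivalently $b_i=\ell_i/h_i=h_i$. Plugging this into the already-derived \eqref{eq:cost3} collapses $B_S=\prod_{i\in S}b_i$ to $H_S=\prod_{i\in S}h_i$ and turns $C=D\cdot prod(\ell)$ into $C=D\cdot\big(prod(h)\big)^2$, yielding \eqref{eq:cost3oneD}, namely $COST_{h,C}(S)=H_S+\tfrac{C}{H_S}$. Thus the whole problem reduces to minimizing the single-variable function $f(x)=x+\tfrac{C}{x}$ over the values $x=H_S$ realizable as subset products of $h$.

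Next I would invoke the elementary fact that $f$ is strictly convex on $(0,\infty)$ with a unique global minimizer at $x=\sqrt{C}$ (from $f'(x)=1-C/x^2$, or directly by AM--GM). Since $f(H_S)\ge f(\sqrt{C})$ for every $S$, taking the minimum over $S$ gives $OPT(\Set,D+1)\ge f(\sqrt{C})$, which is the claimed lower bound. For the equality clause I would use uniqueness of the minimizer: $f(x)=f(\sqrt{C})$ forces $(x-\sqrt{C})^2=0$, so the bound is attained exactly when some admissible value equals the minimizer, i.e.\ when there is $S\subseteq I_n$ with $H_S=\sqrt{C}$.

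The computation is routine once \eqref{eq:cost3oneD} is in hand, so there is no analytic obstacle; the only conceptual point worth stressing is that $H_S$ is not a free real parameter but ranges over the finite set of subset products $\{\prod_{i\in S}h_i:S\subseteq I_n\}$. Consequently the equality condition $H_S=\sqrt{C}$ is a subset-product constraint rather than something one can satisfy by tuning a continuous variable, and it is precisely this discreteness that will carry the NP-hardness of the $\{0,1\}$ problem in the reduction to follow.
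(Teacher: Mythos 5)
Your proposal is, in substance, the paper's own proof: the paper derives \cref{lem:H_S} exactly as you do, by invoking \cref{lem:simple} to reduce the strategy space to subsets $S\subseteq I_n$ (free rate-$0$ coins tossed first), passing from \eqref{eq:cost1} through \eqref{eq:cost3} to \eqref{eq:cost3oneD} under the substitution $\ell_i=h_i^2$ (i.e.\ $b_i=h_i$), and then minimizing $f(x)=x+C/x$ over the achievable values $x=H_S$. Your closing emphasis on $H_S$ ranging over a discrete set of subset products, rather than over a continuum, is also precisely the point the paper exploits in \cref{thm:NPH1}.

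There is, however, one step you assert that is false, and in doing so you have faithfully reproduced an error that sits in the paper itself: the minimum value of $f$ is $f(\sqrt{C})=\sqrt{C}+\frac{C}{\sqrt{C}}=2\sqrt{C}$, which equals the stated bound $\sqrt{C}+\frac{1}{\sqrt{C}}$ only when $C=1$; in the reduction of \cref{thm:NPH1} one has $C=1/N^2$, so the two genuinely differ there. Your sentence ``$OPT(\Set,D+1)\ge f(\sqrt{C})$, which is the claimed lower bound'' silently identifies two different quantities. What your argument actually establishes---and what the lemma should say---is $OPT(\Set,D+1)\ge 2\sqrt{C}$, with equality iff $H_S=\sqrt{C}$ for some $S\subseteq I_n$. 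The equality criterion, which is the only part used in the NP-hardness reduction, is unaffected, so nothing downstream breaks; but a careful write-up should correct the constant to $2\sqrt{C}$ (both here and where it is repeated in the proof of \cref{thm:NPH1}) rather than assert that the computed minimum matches the stated one.
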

 Lemma \ref{lem:H_S} now implies:

 \begin{thm}\label{thm:NPH1}
 	The $\{0,1\}$ A-coins problem is NP Hard.
 \end{thm}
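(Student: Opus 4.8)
The plan is to exhibit a polynomial-time reduction to the $\{0,1\}$ A-coins problem from a known NP-complete problem, using \cref{lem:H_S} to collapse the whole optimization problem into a purely combinatorial condition. \cref{lem:H_S} tells us that, on the special subclass of instances with $\ell_i=h_i^2$, the optimum equals the lower bound $\sqrt{C}+1/\sqrt{C}$ exactly when some subset $S$ satisfies $H_S=\prod_{i\in S}h_i=\sqrt{C}$. Since this subclass is itself a family of legitimate $\{0,1\}$ instances, hardness for it yields hardness for the general problem. Thus the entire task reduces to the following question about the $h_i$: given reals $h_1,\dots,h_n\in(0,1)$ and a target $t=\sqrt{C}$, does some subset have product exactly $t$? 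This is a \emph{subset-product} question, so the natural source problem is SUBSET PRODUCT: given positive integers $a_1,\dots,a_n$ and a target $B$, decide whether some subset has product exactly $B$. SUBSET PRODUCT is NP-complete (e.g.\ by assigning distinct primes to the ground-set elements in a reduction from \textsc{Exact Cover by 3-Sets}).

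The reduction I would use is as follows. Given a SUBSET PRODUCT instance $(a_1,\dots,a_n;B)$ with every $a_i\ge 2$ (which we may assume, discarding any $a_i=1$), set $h_i=1/a_i\in(0,1)$, build the A-coin $A_i=\{c_i,d_i\}$ with $c_i\sim(0,1-h_i)$ and $d_i\sim(1,1-h_i^2)$ as in \cref{lem:H_S}, and take penalty $E=B^2+1$, so that $D=E-1=B^2$. Writing $\Pi=\prod_{i=1}^n a_i$ we have $prod(h)=1/\Pi$, hence $C=D\cdot(prod(h))^2=B^2/\Pi^2$ and $\sqrt{C}=B/\Pi$. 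For a subset $S$ the condition $H_S=\sqrt{C}$ becomes $1/\prod_{i\in S}a_i=B/\Pi$, i.e.\ $\prod_{i\notin S}a_i=B$. Therefore a subset with product exactly $\sqrt{C}$ exists if and only if the complementary subset $I_n\setminus S$ is a solution of the SUBSET PRODUCT instance.

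Combining this with \cref{lem:H_S} finishes the argument: the constructed instance satisfies $OPT(\Set,E)=\sqrt{C}+1/\sqrt{C}=B/\Pi+\Pi/B$ precisely when the SUBSET PRODUCT instance is a yes-instance, and $OPT(\Set,E)$ is strictly larger otherwise. Hence the decision version of the $\{0,1\}$ A-coins problem, with threshold $\theta=B/\Pi+\Pi/B$, decides SUBSET PRODUCT, which proves NP-hardness.

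The step I expect to be the main obstacle is keeping the reduction genuinely polynomial. The tempting route is to read \cref{lem:H_S} additively through logarithms and reduce from PARTITION or SUBSET SUM by setting $h_i=2^{-w_i}$; but when the weights $w_i$ are given in binary this makes $h_i$ an exponentially small rational, so the encoding of the coins blows up and the reduction is not polynomial. Exploiting the multiplicative structure directly, via SUBSET PRODUCT and $h_i=1/a_i$, sidesteps this, since both the probabilities $h_i=1/a_i$ and the penalty $E=B^2+1$ are rationals of polynomial bit-length. The remaining care is routine: checking that $H_S=\sqrt{C}\iff\prod_{i\notin S}a_i=B$ is an exact identity rather than an approximation, and confirming that the produced coins are valid $\{0,1\}$ instances, which holds because $a_i\ge 2$ forces $h_i\in(0,1)$ and $1-h_i<1-h_i^2$.
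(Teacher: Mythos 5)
Your proposal is correct and takes essentially the same route as the paper: a reduction from the subset product problem using exactly the coin gadget $h_i=1/a_i$, $\ell_i=1/a_i^2$ and the equality condition of \cref{lem:H_S}. The only cosmetic difference is your choice of penalty ($D=B^2$ rather than the paper's $D=\left(\frac{prod(a)}{B}\right)^2$), which makes the witness subset $S$ correspond to the \emph{complement} of a subset-product solution instead of the solution itself; both choices yield the same polynomial-size instances and the same iff.
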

 \begin{proof}[Outline of proof]
 	By a reduction from the NP hard {\em subset product} problem \cite{M81}:\\
 	{\bf Input:} An $n+1$ tuple of natural numbers $(m_1,\ldots,m_n,N)$\\
 	{\bf Property:}   There is a subset $S\subseteq I_n$ s.t. $\prod_{i\in S}m_i = N$.
 	
 	Given an instance  $(m_1,\ldots,m_n,N)$ to the subset product problem, we reduce it to an instance $(\Set,D+1)$ to the $\{0,1\}$ problem in which $\Set=\{A_1,\ldots,A_n\}$, where for each $i$, $c_i\sim (0,1-\frac{1}{m_i})$ and  $d_i\sim (1,1-\frac{1}{m_i^2})$ - i.e., $h_i=\frac{1}{m_i}$ and $\ell_i =\frac{1}{m_i^2}$. In addition, we set $D$ to $\left(\frac{prod(m)}{N} \right)^2$.
 	
 	Let $C=D\cdot \left(prod(h)\right)^2=\frac{D}{\left(prod(m)\right)^2}$. Then from Lemma \ref{lem:H_S} it follows that  $OPT(\Set,D+1)\geq \sqrt{C}+\frac{1}{\sqrt{C}}$,  and $OPT(\Set,D+1) = \sqrt{C}+\frac{1}{\sqrt{C}}$ iff there is a subset $S \subseteq I_n$ s.t. $\prod_{i\in   S}m_i=N$. The theorem follows.
 \end{proof}
 \noindent {\bf Note:} The rates 0 and 1 in  \cref{thm:NPH1} could be replaced by any pair of distinct rates $0\leq a<b$.

A summary of the results for discrete coins appears in \cref{tab:discrete} below.
\vspace{-10pt}
	\begin{table}[ht]
	\caption{Solutions and hardenss results for Discrete Coins}
	\vspace{-10pt}
	\begin{center}
		\begin{tabular}{|c|c|c|c|c|}
			\hline
			\#Coins&Reuse&Tosses&Discussed&Comments\\
			\hline\hline
			1&No&1&Lemma 5&\\
			\cline{1-4}
			$n$&Yes&Unlimited  &\multirow{2}{*}{Section 3.1}&Extensions of solutions\\
			\cline{1-3}
			$n$&Yes&Bounded&&{for simple coins}\\
			\cline{1-4}
			$n$&No&Unlimited, order given&Section 3.2& \\
			\hline
			$n$&No&Unlimited &Theorem 7&{NP hard even for \{0,1\} coins}\\
			\hline
		\end{tabular}
	\end{center}
	\label{tab:discrete}
\end{table}
\vspace{-20pt}
	
	 \section{Continuous coins}\label{sec:continuous}

		 A continuous adjustable coin, or CA-coin, enables the user to smoothly and continuously adjust the desired success probability. As a possible example, assume  
	  {an algorithm   which gets a composite integer $n$ as an input, and repeatedly attempts to find a divisor of $n$.    $\mu(P)$ is the cost (e.g. running time)  of finding a divisor with probability $P$, and the penalty $E$ is the loss implied by failing to find a divisor. The reusable version assumes that the algorithm is randomized, and the one time version assumes that the composite number is selected at random from a given distribution.} 

In general, a CA-coin  is  defined by a 
non-decreasing cost function $\mu(P):[P_{min},P_{max}] \rightarrow {\cal R} $ where $0<P_{min}<P_{max}\leq 1$ are the minimum and maximum success probabilities supported by the coin. The (optimal) cost of a   CA-coin  for a penalty $E$ is naturally defined as $$COST(\mu,E)=\min_{P\in[P_{min},P_{max}]}COST(P,\mu(P))  =\min_{P\in[P_{min},P_{max}1]}\mu(P) +(1-P)E.$$

	 We assume, as in the case of discrete A-coins,   that each CA-coin  is efficient, that is, for each $P_0\in [P_{min},P_{max}]$ there is a penalty $E_{P_0}$  s.t. $COST((P ,\mu(P )),E_{P_0})$ has a unique minimum at $P=P_0$. By arguments similar to the ones used in \cref{lem:convex},  this implies that a CA-coin $\mu$ is a nonnegative convex function on $[P_{min},P_{max}]$, and that $r(P)=\mu(P)/P$ is a strictly increasing function of $P$. We restrict our attention to \emph{regular} CA-coins 	$\mu(P)$ which are defined and twice continuously differentiable on the  segment $[P_{min},P_{max}]$. This implies that
		 	$$
	 	\frac{d \mu}{dP} > 0 ~~ \mbox{and} ~~ \frac{d^2\mu}{dP^2} > 0, ~~ P \in (P_{min},P_{max}) \, .
	 	$$
 \subsection{Single CA-coin, single toss} \label{sec:SingleCAcoinSingleToss}
	 Suppose we are given a single (efficient) CA-coin and are allowed to  toss  it once at most. That is, given $E$ we must decide whether we wish to use our CA-coin at all, and if we do, which value of $P$ we should
choose so as to minimize the expected cost. Recall that the expected cost is given by
	 $$
	 COST\left( \mu(P), P, E \right) = \mu(P) + (1-P)E = P\cdot r(P) + (1-P)E \, ,
	 $$
	
	 \noindent
	 where $r(P) = \mu / P$. Differentiating $COST$ with respect to $P$ we obtain
	 $$
	 \frac{d COST}{d P} = \frac{d \mu}{dP} - E \, .
	 $$
	
	 \noindent
	Differentiating $r(P)$ and noting that it is strictly increasing yield
	 $$
	 \frac{dr}{dP} = \frac{1}{P} \left( \frac{d \mu}{dP} - r \right) > 0 \, .
	 $$
	
	 \noindent
	 Observe that for $P_0\in(P_{min},P_{max})$, for each coin  $c_0=(P_0,\mu(P_0))$ there is a unique supporting line, namely, the tangent to $\mu$ at $c_0$.
	 This yields the following conclusions. Let $r_{min}=r(P_{min})$,  $E_{low}$ be the derivative of $\mu$ at $P_{min}$, and $E_{high}$ be the derivative of $\mu$ at $P_{max}$---see \cref{fig:CA-coin}. Then
	 \begin{enumerate}
	 	\item
	 	 	If $E\leq r_{min}$ then the coin is not beneficial for $E$, i.e., $COST(\mu,E)=E$. Otherwise the coin is beneficial for $E$, and the maximum benefit for a given penalty $E$ is attained as follows:
	 	\item
	 		If  $r_{min}< E\leq E_{low}$  then the benefit is maximized at $P_{min}$.
	 		 \item
	 		 	If  $E_{low}< E<E_{high} $ then the maximal benefit is attained at  {the unique} internal value   $P_{opt}\in (P_{min},P_{max})$,  {where $\frac{d \mu}{d P} = E$.}
	 		 	\item
	 		 	If  $E_{high} \leq E $  then   the benefit is maximized at $P_{max}$.
	   	 \end{enumerate}
   	 \begin{figure}[h!]
   	 	\begin{center}
   	 		  \vspace{-150pt}
   	 		\includegraphics[width= \columnwidth]{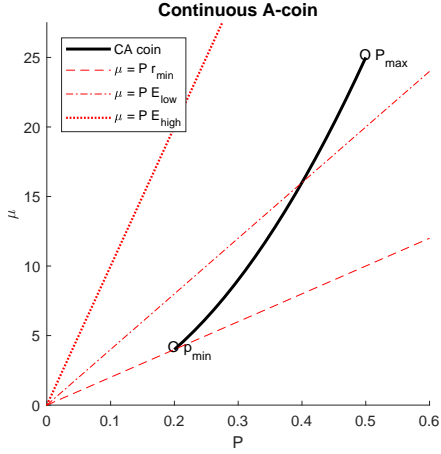}
   	 		 \vspace{-150pt}
   	 		\caption{
   	 			{$c_{min}=(P_{min},\mu(P_{min}))$, $c_{max}=(P_{max},\mu(P_{max}))$, and $r_{min}=\frac{\mu(P_{min})}{P_{min}}$. If $E<r_{min}$ then $c_{min}$ lies above the line $\mu=PE$, meaning that the CA-coin is not beneficial for $E$. For $E\in (r_{min},E_{low}]$ any supporting line of $\mu=PE$ passes through $c_{min}$. For  $E\in (E_{low},E_{high})$ there is a unique supporting line that passes through an internal point of the coin, and for  $E\in [E_{high},\infty)$ any supporting line passes through $c_{max}$.
   	 		} 	 		}
   	 		\label{fig:CA-coin}
   	 	\end{center}
   	 \end{figure}
	
	  \subsection{Reusable CA-coins}
	
	 Suppose next that we are given a single regular CA-coin which we may toss multiple times, selecting $P$ for each toss. The optimal strategy for minimizing the expected cost as a function of the number of tosses we are allowed, is obtained recursively from the observations of the previous subsection \ref{sec:SingleCAcoinSingleToss}. If the number of tosses is unlimited and $r_{min} < E$, then we can reduce the expected cost   to $r_{min}$   by repeatedly tossing the coin with $P = P_{min}$ until we succeed. If the number of tosses is bounded  then we can use the recursive approach described in \cref{sec:ReusableSimpleCoins}.
	
	  { When we are given a few regular CA-coins $\{\mu_1,\ldots,\mu_k\}$ which we may toss multiple times, we adapt a variant of strategies   used  in \cref{sec:ReusableSimpleCoins,sec:norep}: 
	 	 whenever  we need to toss a coin for a given penalty $E$,  we select $\mu_j$ for which $COST(\mu_j,E)$ is minimized. This can be done by computing $COST(\mu_i,E)$ for all $i\in [1,k]$.}

 \ignore{hardness
	 \subsection{Hardness result for one-time  CA-coins}

	\shlomo{The proof is for linear CA-coins, which are not strictly convex. Should we omit this NP-completeness section or use a more complex coin? I'm a bit in favor of the 1st option.} We say that a CA-coin $\mu(P)$ is linear if $\mu$ is a piecewise linear function of $P$. The linear CA-problem is the CA-problem restricted to linear CA-coins. We show that this problem is NP-hard by showing that a $\{0,1\}$ discrete coin   can be {\em simulated} by a linear CA-coin, as discussed below. 

	 For simplicity, we consider a $\{0,1\}$ coin $A=\{c,d\}$, where $c=(0,P_0)$ and $d=(P_1,P_1),~0<P_0<P_1\leq 1$. The {\em linear version} of $A$, denoted $\mu_A$, is the CA-coin defined by
	 $$
	 \mu_A(P)=\left\{
	 \begin{array}{cc}
	 0  &0<P\leq P_0 \\
	 \frac{P_1}{P_1-P_0}P-\frac{P_0P_1}{P_1-P_0}&P\in[P_0,P_1]\\
	 \infty  & P\in (P_1,1]
	 \end{array}
	 \right .
	 $$
	 \begin{lem}\label{lem:linvers}
	 	Let $A$ be a 2L-2C A-coin as defined above, and let $\mu_A$ be its linear version. Then the following holds for any penalty  $E>\mu(P_0)$:
	 	$$COST(A,E) =\left\{
	 	\begin{array}{cc}
	 	\mu_A(P_0)+P_0E&E\leq \frac{P_1}{ (P_1-P_0)}\\
	 	\mu_A(P_1)+P_1E&otherwise.
	 	\end{array}
	 	\right.
	 	$$
	 \end{lem}
	 \begin{proof} omitted
	 \end{proof}
	 {\bf Note:} Lemma \ref{lem:linvers} remains valid for any function $\mu$ s.t. $\mu(P)=\mu_A(P)$ for $P\in[0,P_0]\cup[P_1,1]$, and $\mu_P\geq\mu_A(P)$ for $P\in[P_0,P_1]$.
	
	 Lemma \ref{lem:linvers} implies that in the proof of Theorem \ref{thm:NPH1}, each  2L-2C normalized A coin used in the reduction can be replaced by its linear version. Thus we get:
	 \begin{thm}\label{thm:NPH2}
	 	The CA problem restricted to linear CA-coins  is NP Hard.
	 \end{thm}
	 \noindent
endignore hardness}
	
\section{Conclusion}
 {We present the concept of adjustable coins, which aims to model a scenario in which  various algorithms for solving a given problem can be applied: Each such algorithm is modeled by an adjustable coin, which is characterized by a cost and a probability of success. {The related optimization problem is: Given a set of independent A-coins and   a penalty for failing to solve the problem, find a  sequence of coin-tosses which minimizes the expected cost, subject to possible further restrictions.}
 	
  We note that all our solutions are by offline algorithms, which require that the full set (or sequence) of coins is given before the first coin is tossed. An interesting problem is   what conditions enable useful algorithms which use only partial information on the available coins (e.g., that the coins are drawn from a known distribution, or that only a limited number of coins is known ahead of time)}.

 \bibliographystyle{plain}
 \bibliography{a-coins}
\end{document}